\pgfplotsset{compat=1.15}
\newtheorem{theo}{Theorem}
\theoremstyle{definition}
\begin{document}
\title{On powers of circular arc graphs :  A Note}
\def\correspondingauthor{\footnote{Corresponding author}}
\author{Ashok Kumar Das\correspondingauthor{}  and  Indrajit Paul  \\Department of Pure Mathematics\\
University of Calcutta\\
35, Ballygunge Circular Road\\
Kolkata-700019\\
Email Address - ashokdas.cu@gmail.com \&
paulindrajit199822@gmail.com}
\date{}
\maketitle
\begin{abstract}
   A class of graphs $\mathcal{C}$ is closed under powers if for every graph $G\in\mathcal{C}$ and every $k\in\mathbb{N}$, $G^k\in\mathcal{C}$. Also $\mathcal{C}$ is strongly closed under powers if for every $k\in\mathbb{N}$, if $G^k\in\mathcal{C}$, then $G^{k+1}\in\mathcal{C}$. It is known that circular arc graphs and proper circular arc graphs are closed under powers. But it is open whether these classes of graphs are also strongly closed under powers. In this note we have settled these problems. Our method also gives a simple proof that the class of  circular arc graphs and proper circular arc graphs are closed under powers.
\end{abstract}
\par\noindent Keywords: Circular arc graphs, circular ordering, proper circular ordering, strongly closed.
\section{Introduction}
Let $G$ be a simple graph, finite and connected graph. The \textit{distance} between two vertices $u, v\in V(G)$ in $G$, denoted by $d_G(u,v)$ is the length of the shortest $u$-$v$ path. The \textit{diameter} of $G$, denoted by \textit{diam}$(G)$, is the largest distance between any pair of vertices in $G$. For a positive integer $k$, the $k$th power of a graph $G=(V,E)$, denoted by $G^k$, is a graph whose vertex set is $V$ and two vertices $u$ and $v$ are adjacent in $G^k$ if and only if $d_G(u,v)\leq k$.\\
\par A graph class $\mathcal{C}$ is \textit{closed under powers} if for every $G\in\mathcal{C}$ and for any natural number $k$, $G^k\in\mathcal{C}$. And $\mathcal{C}$ is \textit{strongly closed under powers} if for every $k\in\mathbb{N}$, if $G^k\in \mathcal{C}$ then also $G^{k+1}\in\mathcal{C}$. However, powers of graphs have been studied from different view points, it is interesting to the researchers to investigate which classes of graphs are closed or strongly closed under powers. The first result in this direction was given by Lubiw \cite{6}, who proved that all the powers of strongly chordal graphs (for definition see \cite{mcg}) are strongly chordal. This result has been improved by Raychoudhuri \cite{9}, showing that the class of strongly chordal graphs is strongly closed under powers. Now many results in this line are known. Chang and Nemhauser \cite{1} proved that for a chordal graph $G$ if $G^2$ is chordal then all the powers of $G$ are also chordal. A more crucial result with elegant proof for this class of graphs was given by Duchet \cite{3}. He proved that, if $G^k$ is chordal, then so is $G^{k+2}$. A systematic investigation of some classes of graphs closed and strongly closed under powers has been done by Flotow \cite{4}. In this paper we shall prove some open problems in this area.

\section{Preliminaries}
A graph $G=(V,E)$ is an \textit{interval graph}, if corresponding to each vertex we can assign an interval of the real line such that two vertices of $G$ are adjacent if and only if their corresponding intervals intersect. An interval graph $G$ is a \textit{proper interval graph}, if there is an interval representation of $G$ such that no interval is properly contained in another.\\
A graph $G=(V,E)$ is a \textit{circular arc graph }, if corresponding to each vertex of $V$ we can assign a circular arc of a host circle such that two vertices of $G$ are adjacent if and only if their corresponding arcs intersect. A circular arc graph $G$ is a \textit{proper circular graph} if there exists a circular arc representation such that no arc is properly contained in another.\\
Raychaudhuri \cite{8} proved that the class of interval graphs is strongly closed under powers. Ramalingam and Pandu Rangan \cite{7} characterized an interval graph in terms of its vertex ordering, which is known as interval ordering, defined as follows.\\
An \textit{interval ordering} of $G$ is an ordering of $V(G)$ into $[v_1,v_2,...,v_n]$ such that $i<l<j$ and $v_iv_j\in E(G)$ imply $v_lv_j\in E(G)$. Ramalingam and Pandu Rangan proved that a graph is an interval graph if and only if it has an interval ordering. M.Chen and G.J.Chang \cite{2} gave a much simple proof of Raychaudhuri's result using the vertex ordering characterization of interval graphs. In the same paper Raychaudhuri also proved that the class of proper interval graphs is strongly closed under powers. It can be observed that in the interval representation of a proper interval graph ordering of the left end points of the intervals is same as the ordering of the right end points of the intervals. With this observation in mind Roberts \cite{10} defined proper interval ordering of the vertex set $V(G)$ of $G$ in the following way. If the vertex set has a ordering $[v_1,v_2,...,v_n]$ such that $[v_1,v_2,...,v_n]$ and $[v_n,v_{n-1},...,v_2,v_1]$ are interval orderings, equivalently $i<l<j$ and $v_iv_j\in E(G)$ imply $v_iv_l\in E(G)$ and $v_lv_j\in E(G)$, then the ordering of the vertex set $V(G)$ is a \textit{proper interval ordering}. Roberts \cite{10} also proved that $G$ is a proper interval graph if and only if $G$ has a proper interval ordering. Using this characterization Chen and Chang \cite{2} gave a simple proof of Raychaudhuri's \cite{8} result that the class of proper interval graphs is strongly closed under powers.\\
For the class of circular arc graphs, Raychaudhuri \cite{9} proved that this class is also closed under powers. Flotow \cite{5} gave an independent proof of this result.\\
Flotow \cite{5} also proved the following.
\vspace{.4cm}
\par\noindent\textbf{Theorem A.}\hspace{.2cm} \textit{If $G$ is a proper circular arc graph, then $G^k$ is a proper circular arc graph.}\vspace{.3cm}\\
But their proofs are much complicated and need other results. Also it is open whether the class of  circular arc graphs is strongly closed under powers. Following progress in this direction was made by Flotow \cite{5}.\\ \vspace{.3cm}
\par\noindent\textbf{Theorem B.} \textit{For any graph $G$}
\begin{enumerate}[(i)]
    \item  \textit{if $G^k$ is a circular arc graph, then so is $G^{k+2}$,}
    \item \textit{if \textit{diam}$(G^k)\geq 4$ and $G^k$ is a circular arc graph, then so is $G^{k+1}$.}
\end{enumerate}

Also it is open whether the class of proper circular arc graphs is strongly closed under powers. In this paper we have given  simple proofs of these open problems.
\section{Main Result}
Tucker \cite{11} characterized circular arc graphs and proper circular arc graphs in terms of their augmented adjacency matrices. Motivated by the interval ordering characterization \cite{7} of interval graphs, we give here an ordering characterization of circular arc graph. Let $G=(V,E)$ be an undirected graph. We index its vertices as $v_1,v_2,...,v_n$ such that if we place the vertices $v_1,v_2,...,v_n$ on a circle $C$, they will appear circularly on $C$ in the clockwise movement. Now, we say $\sigma$=$[v_1,v_2,...v_n]$ is a \textit{circular ordering} if $v_iv_j\in E(G)$ and $i<j$ then $v_iv_{i+1}, v_iv_{i+2},...,v_iv_{j-1}\in E$ or $v_{i-1}v_j,v_{i-2}v_j,...,v_1v_j,v_nv_j,...v_{j+1}v_j\in E$.\\ 
Now, we characterize a circular arc graph in terms of circular ordering of the vertices of the graph.\\
\begin{theo}\label{t1}
A graph $G=(V,E)$ is a circular arc graph if and only if the vertex set $V(G)$ has a circular ordering.
\end{theo}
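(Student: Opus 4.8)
The plan is to prove both implications directly, exploiting the close analogy between the proposed circular ordering and the interval ordering of Ramalingam and Pandu Rangan: a circular ordering is meant to be the wrap-around analogue of an interval ordering, so the argument should mirror the interval-graph proof while accounting for the arcs that cross a chosen reference point. Throughout I would assume, as in the introduction, that $G$ is connected and that a circular-arc representation is in general position (all $2n$ arc endpoints distinct, and no arc equal to a single point or to the whole host circle); these reductions are harmless and streamline the bookkeeping.

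For the necessity direction (a circular-arc graph admits a circular ordering) I would start from a representation on a host circle $H$ and record, for each vertex $v$, the clockwise endpoint $h(v)$ of its arc $A_v$, i.e. the point at which one leaves $A_v$ when travelling clockwise. Reading the points $h(v)$ clockwise around $H$ yields a cyclic sequence of the vertices, and I would take this to be $\sigma$, placing the vertices on $C$ in exactly that cyclic order. The task is then to verify the defining implication: if $v_i<v_l<v_j$ and $A_{v_i}\cap A_{v_j}\neq\emptyset$, then $A_{v_l}\cap A_{v_j}\neq\emptyset$. I would fix $v_j$, cut $H$ immediately clockwise after $h(v_j)$, and unroll it to a line so that $h(v_j)$ becomes the rightmost endpoint and the relation $v_i<v_l<v_j$ becomes the genuine linear order of clockwise endpoints. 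Relative to this cut $A_{v_j}$ is an ordinary interval ending at the rightmost point, and the verification splits according to whether $A_{v_i}$ and $A_{v_l}$ cross the cut. When neither crosses, the argument is verbatim the interval-ordering computation, since $h(v_l)$ is squeezed into $A_{v_j}$; the remaining, wrap-crossing cases are exactly where care is needed.

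For the sufficiency direction (a circular ordering yields a representation) I would run the construction in reverse. Placing the positions $1,\dots,n$ on $H$ in the cyclic order of $\sigma$, I would assign to each vertex $v_j$ an arc whose clockwise endpoint sits at position $j$ and whose counterclockwise endpoint reaches back exactly to the position of the farthest vertex that the circular ordering forces into $N(v_j)$; informally, $A_{v_j}$ should cover precisely $v_j$ together with the consecutive block of vertices the ordering pins to $v_j$. I would then show that two arcs meet if and only if the corresponding vertices are adjacent, deducing one direction from the defining implication of a circular ordering and the other from minimality of the chosen spans, again separating the pairs whose arcs straddle the reference point.

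The main obstacle, in both directions, is the genuinely cyclic nature of the betweenness relation $v_i<v_l<v_j$ together with the arcs that wrap around the cut: these are precisely the configurations with no counterpart in the interval setting, and for which a naive endpoint argument can fail. To tame them I would single out the vertices whose arcs contain one fixed reference point of $H$; since all such arcs share that point they pairwise intersect and hence form a clique, and I expect this clique to be the right handle for the wrap-around cases. Isolating it should reduce the remaining structure to a pure interval-ordering argument in the necessity direction and prevent spurious cross-cut intersections in the sufficiency direction. I anticipate that the bulk of the technical work, and the most error-prone step, is the exhaustive but essentially routine case analysis certifying that these crossing arcs behave correctly.
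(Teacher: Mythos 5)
Your sufficiency half (circular ordering $\Rightarrow$ circular arc graph) is sound and is essentially the paper's own construction: the arc of $v_j$ ends clockwise at position $j$ and reaches counterclockwise back to the far end of the consecutive block of neighbors that the ordering forces, and adjacency then coincides with arc intersection. Note, though, that this construction only works if the betweenness hypothesis is read in its full cyclic sense, i.e.\ the implication must hold for \emph{every} clockwise triple $v_i<v_l<v_j$, including the wrap-around ones; otherwise the block of neighbors of $v_j$ need not reach back to $v_i$.

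The genuine gap is in the necessity half, and it sits exactly in the case you postponed as ``exhaustive but essentially routine'': that case is not routine, it is false. Order the arcs by clockwise endpoints $h(v)$ and cut just after $h(v_j)$, as you propose. When $A_{v_i}$ crosses the cut (equivalently, $h(v_j)\in A_{v_i}$ while $h(v_i)\notin A_{v_j}$), nothing constrains $A_{v_l}$ at all: it can be a small arc lying strictly between $h(v_i)$ and $h(v_j)$, disjoint from $A_{v_j}$. Concretely, take the path $P_3$ with vertices $a,b,c$, edges $ab,bc$, and arcs $A_a=[0,1]$, $A_b=[0.5,2]$, $A_c=[1.5,3]$ on a circle of circumference $12$ (connected, all endpoints distinct, so your normalizations do not exclude it). The clockwise-endpoint order is $(a,b,c)$, and the wrap-around triple $b<c<a$ has $ba\in E$ but $ca\notin E$. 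Moreover, no repair by a cleverer ordering or reference point is possible: $P_3$ admits only the two cyclic orderings $(a,b,c)$ and $(a,c,b)$, and each violates the cyclic condition (the triples $b<c<a$ and $b<a<c$ respectively). So $P_3$ is a connected circular arc graph with no circular ordering in the sense your sufficiency construction requires, and the clique of arcs through a reference point cannot change that. You have in fact located precisely the weak point of the paper's own argument, which passes from $v_i<v_l<v_j$ to the linear inequalities $i<l<j$ and thereby silently discards the wrap-around triples; but the step you defer cannot be completed, because the implication being certified is false there, and the two readings of the definition pull in opposite directions (the non-wrap-around reading would make necessity provable but breaks the sufficiency construction, which needs the full cyclic condition).
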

\begin{proof}
Let $G=(V,E)$ be a circular arc graph. So $G$ has a circular arc representation (with closed arcs of an $n$-hour clock). Now, order the vertices of $G$ according to increasing order of counterclockwise end points of the corresponding arcs moving clockwise around the circle. Let $\sigma=[v_1,v_2,...,v_n]$ be the such ordering. Let $A_i$ be the arc corresponding to $v_i$. Now assume $A_i\cap A_j\neq\emptyset$, where $i<j$. First, assume that the counterclockwise end point of $A_j$ belongs to $A_i$. Then for all $k$ lies between $i$ and $j$ in the clockwise movement, the counterclockwise end point of $A_k$ belongs to $A_i$. Thus $v_iv_j\in E(G)$ implies $v_iv_{i+1}, v_iv_{i+2},\dots, v_iv_{j-1}\in E(G)$. In the other case, assume counterclockwise end point of $A_i$ belongs to $A_j$. Then for all $k$ lies between $j$ and $i$ in the clockwise movement, the counterclockwise end point of $A_k$ lies in $A_j$. Hence $v_iv_j\in E(G)$ implies $v_{i-1}v_j, v_{i-2}v_j,\dots, v_1v_j,v_nv_j,\dots, v_{j+1}v_j\in E(G)$. Hence the vertex set $V(G)$ has a circular ordering.  
\par For the converse, let the vertex set $V(G)$ has a circular ordering $\sigma=[v_1,v_2,...,v_n]$, where the vertices $v_1,v_2,...,v_n$ are placed  in the clockwise pattern 
on an $n$-hour clock. Also corresponding to the vertex $v_i$, we have the $i$th hour point on the clock. Now, we shall construct the circular arc $A_i$ corresponding to the vertex $v_i$ as follows. $A_i=[i,a]$ where $v_a$ is the last vertex adjacent to $v_i$ in the circular ordering $\sigma$. Then $A_i\cap A_j\neq\emptyset$ $(i<j)$ if and only if counterclockwise end point of $A_j$ lies in $A_i$ or counterclockwise end point of $A_i$ lies in $A_j$. Thus $v_iv_j\in E$ if and only if $A_i\cap A_j\neq\emptyset$.  This completes the proof.
\end{proof}
As stated before, it is known that the class of circular arc graphs is closed under powers. In the next theorem we shall prove that it is also strongly closed under powers using the above theorem.
\begin{theo}\label{t2}
Let $G$ be any graph and $k$ be any natural number. If $G^k$ is a circular arc graph, then $G^{k+1}$ is also a circular arc graph.
\end{theo}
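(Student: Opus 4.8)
The plan is to invoke Theorem~\ref{t1} in both directions. Since $G^k$ is a circular arc graph, its vertex set carries a circular ordering $\sigma=[v_1,v_2,\dots,v_n]$, and because $V(G^k)=V(G^{k+1})=V(G)$ it is legitimate to try to reuse $\sigma$. My aim would be to show that this \emph{same} ordering $\sigma$ is also a circular ordering for $G^{k+1}$; Theorem~\ref{t1} would then immediately give that $G^{k+1}$ is a circular arc graph.

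First I would fix a successive triple $v_i<v_l<v_j$ in $\sigma$ with $v_iv_j\in E(G^{k+1})$, that is, $d_G(v_i,v_j)\le k+1$, and try to establish $v_lv_j\in E(G^{k+1})$, that is, $d_G(v_l,v_j)\le k+1$. The natural device is a shortest $v_i$--$v_j$ path $v_i=u_0,u_1,\dots,u_d=v_j$ with $d\le k+1$, together with its penultimate vertex $w=u_{d-1}$, which satisfies $wv_j\in E(G)$ and $d_G(v_i,w)=d-1\le k$, so that $v_iw\in E(G^k)$. If the position of $w$ makes $v_i<v_l<w$ a successive triple, then the circular ordering property of $G^k$ applied to $(v_i,v_l,w)$ yields $v_lw\in E(G^k)$, whence $d_G(v_l,v_j)\le d_G(v_l,w)+1\le k+1$, exactly as required.

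A cleaner way to organise the whole argument, writing $N_H[x]=\{u: d_H(u,x)\le 1\}$ for the closed neighbourhood of $x$ in a graph $H$ and $N_{H}[x]$ accordingly for powers, is to observe that $N_{G^{k+1}}[v_j]=\bigcup_{w\in N_G[v_j]}N_{G^k}[w]$, since $d_G(u,v_j)\le k+1$ holds if and only if some neighbour $w$ of $v_j$ has $d_G(u,w)\le k$. Each set $N_{G^k}[w]$ is an arc of $\sigma$ coming from the circular ordering of $G^k$, and every one of them contains $v_j$, because $d_G(w,v_j)\le 1\le k$. A union of arcs sharing the common vertex $v_j$ is connected on the circle, hence is again a single arc (or the whole circle), so $N_{G^{k+1}}[v_j]$ is an arc of $\sigma$ for every $j$; reading this back through Theorem~\ref{t1} recovers the desired circular ordering of $G^{k+1}$.

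The step I expect to be the main obstacle is controlling the cyclic position of $w$, equivalently verifying that the union of arcs genuinely closes up without gaps, when the shortest path wraps around the circle relative to the arc carrying $v_l$; the cases in which $w$ does not fall between $v_i$ and $v_j$ on the side of $v_l$ are precisely the ones that must be argued by hand from the definition of circular ordering. This wraparound is exactly the phenomenon that kept the unconditional statement open: Flotow's \textbf{Theorem~B}(ii) had to assume \textit{diam}$(G^k)\geq 4$ in order to rule out such configurations, and the crux of the present approach is to dispose of them directly, with no diameter hypothesis.
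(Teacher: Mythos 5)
Your first, path-based route is exactly the paper's argument: reuse $\sigma$, take a shortest $v_i$--$v_j$ path with penultimate vertex $w$ (the paper's $v_r$), and when $v_i<v_l<w$ apply the circular ordering of $G^k$ to the triple $(v_i,v_l,w)$. But you stop there, explicitly defer the complementary case, and never prove it --- and you misdiagnose that case as the hard ``wraparound'' obstacle tied to Flotow's diameter hypothesis. In fact it is immediate. Dispose first of the case $d_G(v_i,v_j)\le k$, where the circular ordering of $G^k$ applies directly to $(v_i,v_l,v_j)$; in the remaining case $d_G(v_i,v_j)=k+1$ we have $d_G(v_i,w)=k\ge 1$, so $w\neq v_i$, and if $w=v_l$ then $d_G(v_l,v_j)=1$ and we are done. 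Otherwise, if $v_i<v_l<w$ fails, then $w$ lies in the clockwise arc from $v_i$ to $v_l$, so the clockwise order is $v_i,w,v_l,v_j$, i.e. $w<v_l<v_j$; since $wv_j\in E(G)\subseteq E(G^k)$, the circular ordering of $G^k$ applied to $(w,v_l,v_j)$ gives $v_lv_j\in E(G^k)\subseteq E(G^{k+1})$. With that one line added, your second paragraph becomes the paper's complete proof; no diameter assumption ever enters.

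Your proposed ``cleaner'' reorganisation, by contrast, has a genuine gap at the read-back step. The identity $N_{G^{k+1}}[v_j]=\bigcup_{w\in N_G[v_j]}N_{G^k}[w]$ is correct, and each $N_{G^k}[w]$ is an arc of $\sigma$ containing $v_j$, so the union is an arc containing $v_j$. But the circular ordering property of Theorem~\ref{t1} is \emph{not} ``every closed neighbourhood is an arc'': unwinding the definition, it says that $N[v_j]$ is a union of clockwise arcs terminating at $v_j$, hence an arc whose \emph{clockwise endpoint} is $v_j$ (or all of $V$). Your union need not have this form: a neighbour $w\in N_G[v_j]$ may lie clockwise beyond $v_j$, and then $N_{G^k}[w]$ is an arc ending at that $w$, so the union overshoots $v_j$ and you learn only that $N_{G^{k+1}}[v_j]$ is an arc with $v_j$ possibly in its interior. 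An ordering in which all closed neighbourhoods are arcs (a ``convex-round'' enumeration) does not satisfy the definition of circular ordering used in Theorem~\ref{t1}, so that theorem cannot be invoked on this weaker conclusion, and whether such graphs are circular arc graphs is a separate nontrivial fact that the paper neither states nor proves. As written, the union argument therefore does not show that $\sigma$ is a circular ordering of $G^{k+1}$, and the burden falls back on the case analysis of your second paragraph, which you left incomplete.
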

\begin{proof}
Since $G^k$ is a circular-arc graph, it has a circular arc representation. In this representation label the vertices according to the increasing order of their counterclockwise end points of the arcs moving clockwise direction around the circle. Let $\sigma=[v_1,v_2,...,v_n]$ be the such ordering. By Theorem 1, it is a circular ordering of $G^k$. Next consider the graph $G^{k+1}$. We shall show that $G^{k+1}$ has a circular arc representation and $\sigma=[v_1,v_2,...,v_n]$ is also a circular ordering of $G^{k+1}$. Now assume $v_iv_j\in E(G^k)$, where $i<j$. Then also $v_iv_j\in E(G^{k+1})$. Thus in $G^{k+1}$, $v_i$ is adjacent to $v_{i+1}, v_{i+2},\dots, v_{j-1}$ or $v_j$ is adjacent to $v_{i-1},v_{i-2}...,v_1,v_n,...,v_{j+1}$; implies $\sigma$ is a circular ordering of $G^{k+1}$. Next assume $v_iv_j\in E(G^{k+1})$ but $v_iv_j\notin E(G^k)$. So, $d_G(v_i,v_j)=k+1$ and $P:v_i,v_p,\dots,v_r,v_j$ be a smallest path of length $k+1$ in $G$. Now the subpath $v_i$-$v_r$ of $P$ is a shortest path of length $k$ in $G$ i.e., $d_G(v_i,v_r)=k$. Again $v_rv_j\in E(G)$. First, we assume that counterclockwise end point of $A_j$ belongs to $A_r$. Again $v_iv_r\in E(G^k)$, now extend the clockwise end point of $A_i$ up to the counterclockwise end point of $A_j$. Suppose in this process extended arc $A_i$ intersects a new arc $A_s$. Obviously $A_r\cap A_s\neq\emptyset$, i.e., $v_rv_s\in E(G)$. Thus $v_iv_s\in E(G^{k+1})$. Thus doing this extension of arc we have not created any edge that is not in $E(G^{k+1})$ (see Figure 1).\\
\begin{figure}[H]
    \centering
   \begin{tikzpicture}[line cap=round,line join=round,>=triangle 45,x=1cm,y=1cm,scale=1.5]
\clip(6,4) rectangle (11,9);
\draw [line width=1.5pt] (8,7) circle (1.4142135623730951cm);
\draw [shift={(8,7)},line width=1pt]  plot[domain=3.119496802705948:4.706771061959734,variable=\t]({1*1.8104419349981922*cos(\t r)+0*1.8104419349981922*sin(\t r)},{0*1.8104419349981922*cos(\t r)+1*1.8104419349981922*sin(\t r)});
\draw [shift={(8,7)},line width=1pt]  plot[domain=3.9238362533187647:5.500941707450615,variable=\t]({1*2.2415396494374127*cos(\t r)+0*2.2415396494374127*sin(\t r)},{0*2.2415396494374127*cos(\t r)+1*2.2415396494374127*sin(\t r)});
\draw [shift={(8,7)},line width=1pt]  plot[domain=5.138670811582712:5.8589908992758195,variable=\t]({1*2.3942639787625755*cos(\t r)+0*2.3942639787625755*sin(\t r)},{0*2.3942639787625755*cos(\t r)+1*2.3942639787625755*sin(\t r)});
\draw [shift={(8,7)},line width=1pt]  plot[domain=4.794347830323393:5.003150606396158,variable=\t]({1*2.8094305472817798*cos(\t r)+0*2.8094305472817798*sin(\t r)},{0*2.8094305472817798*cos(\t r)+1*2.8094305472817798*sin(\t r)});
\begin{scriptsize}
\draw [fill=black] (8.55,4.64) circle (.5pt);
\draw [fill=black] (8.78,4.7) circle (.5pt);
\draw [fill=black] (8.01,4.62) circle (.5pt);
\draw [fill=black] (8.3,4.6) circle (.5pt);
\draw (9.8,5.5) node[anchor=north west,scale=1.5] {$A_i$};
\draw (7.9,4.65) node[anchor=north west,scale=1.5] {$A_s$};
\draw (7.5,5.2) node[anchor=north west,scale=1.5] {$A_r$};
\draw (6.25,6.5) node[anchor=north west,scale=1.5] {$A_j$};

\end{scriptsize}
\end{tikzpicture}
    \caption{Extension of the clockwise end-point of $A_i$ up to anticlockwise end-point of $A_j$.}
    \label{fig:placeholder}
\end{figure}
Next, suppose the clockwise end point of $A_j$ belongs to $A_r$. Also $A_i\cap A_r\neq\emptyset$ as $v_iv_r\in E(G^k)$ and $G^k$ is a circular-arc graph.\\
Now, as before we extend the clockwise end point of $A_j$ up to the counterclockwise end point of $A_i$. Suppose in this process $A_j$ intersect a new arc $A_s$. Then $v_rv_s\in E(G)$ and $v_jv_s\in E(G^{k+1})$. Thus, again we conclude that doing this extension we have not created any edge that is not in $E(G^{k+1})$ (see Figure 2). 
\begin{figure}[H]
    \centering
    \begin{tikzpicture}[line cap=round,line join=round,>=triangle 45,x=1cm,y=1cm,scale=1.5]
\clip(5,4) rectangle (10,9);
\draw [line width=1.5pt] (8,7) circle (1.4142135623730951cm);
\draw [shift={(8,7)},line width=1.pt]  plot[domain=3.119496802705948:4.706771061959734,variable=\t]({1*1.8104419349981922*cos(\t r)+0*1.8104419349981922*sin(\t r)},{0*1.8104419349981922*cos(\t r)+1*1.8104419349981922*sin(\t r)});
\draw [shift={(8,7)},line width=1pt]  plot[domain=3.9238362533187647:5.500941707450615,variable=\t]({1*2.2415396494374127*cos(\t r)+0*2.2415396494374127*sin(\t r)},{0*2.2415396494374127*cos(\t r)+1*2.2415396494374127*sin(\t r)});
\draw [shift={(8,7)},line width=1pt]  plot[domain=2.072572586940641:3.117354976725056,variable=\t]({1*1.6424676556937121*cos(\t r)+0*1.6424676556937121*sin(\t r)},{0*1.6424676556937121*cos(\t r)+1*1.6424676556937121*sin(\t r)});
\draw [shift={(8,7)},line width=1pt]  plot[domain=3.3091606811227607:3.7461912436832696,variable=\t]({1*2.038553408669981*cos(\t r)+0*2.038553408669981*sin(\t r)},{0*2.038553408669981*cos(\t r)+1*2.038553408669981*sin(\t r)});
\begin{scriptsize}
\draw [fill=black] (5.79,7) circle (.5pt);
\draw [fill=black] (6.41,5.42) circle (.5pt);
\draw [fill=black] (5.812809375102383,6.683302714946952) circle (.5pt);
\draw [fill=black] (5.883477593804778,6.364049605650263) circle (.5pt);
\draw [fill=black] (6.0233159078901854,6.011657953945093) circle(.5pt);
\draw [fill=black] (6.186208096402429,5.737360332302244) circle (.5pt);
\draw (6.3,5.9) node[anchor=north west,scale=1.5] {$A_s$};
\draw (7.9,4.7) node[anchor=north west,scale=1.5] {$A_j$};
\draw (7.2,5.7) node[anchor=north west,scale=1.5] {$A_r$};
\draw (5.9,7.8) node[anchor=north west,scale=1.5] {$A_i$};

\end{scriptsize}
\end{tikzpicture}
    \caption{Extension of the clockwise end-point of $A_j$ up to the anticlockwise end-point of $A_i$.}
    \label{fig:placeholder}
\end{figure}
  This gives a circular arc representation of $G^{k+1}$, where the ordering of the counterclockwise end points is same as $\sigma$. This completes the proof of Theorem 2.   
\end{proof}
\textbf{Observation.} Using the method of above proof, it is easy to prove that the class of circular arc graphs is closed under powers.\vspace{.3cm}\\
Next, we shall give a circular ordering characterization of proper circular arc graphs. Let $G=(V,E)$ be a proper circular arc graph. Then in the circular arc representation $\{ A_v,v\in V\}$ of $G$, such that no arc is properly contained in another. Now, by \cref{t1}, the vertex set $V$ has a circular ordering. Let $\sigma=[v_1,v_2,...,v_n]$ be the such ordering. Also, as in the proof of \cref{t1}, assume corresponding to $v_i$ we have the $i$-th hour point on an $n$-hour clock. Next, assume corresponding to $v_i$ we have the arc $A_i$, where $A_i=[i,\hat{i}]$ and the arc is extending from counterclockwise endpoint $i$ to clockwise endpoint $\hat{i}$. Now $\sigma$ is the ordering of the vertices according to the increasing order of the counterclockwise endpoints of the arcs. Let $v_m$ be the first vertex in the ordering $\sigma$ such that $m>\hat{m}$. It may be noted that if there exists no such $v_m$, then the graph $G$ is a proper interval graph. Now, we  write $\sigma = [v_1,v_2,...,v_{m-1},v_m,v_{m+1},...,v_n]$. As $G$ is a proper circular arc graph, if $m<m_1<m_2$ then $\hat{m}<\hat{m_1}<\hat{m_2}$. Next, we order the vertices of $G$ according to the decreasing order of the clockwise endpoints of the arcs, then we have the ordering $\sigma'=[v_{m-1},v_{m-2},...,v_2,v_1,v_n,v_{n-1},...v_m]$ of the vertices. Now to show that $\sigma'$ is also a circular ordering, let $u_i$ be the vertex which is at the $i$th position in $\sigma'$. Then as in \cref{t1} we can easily prove that $\sigma'=[u_1,u_2,...,u_n]$ is a circular ordering. Thus $\sigma'=[v_{m-1},v_{m-2},...,v_2,v_1,v_n,v_{n-1},...v_m]$ is a circular ordering.\\

\par Now we say the vertex set $V(G)=\{v_1,v_2,...,v_n\}$ has a $\textit{proper circular ordering}$ if $V(G)$ has two orderings  $\sigma = [v_1,v_2,...,v_{m-1},v_m,v_{m+1},...,v_n]$ and $\sigma'=[v_{m-1},v_{m-2},...,v_2,v_1,v_n,v_{n-1}\\,...v_m]$ such that $\sigma$ and ${\sigma}'$ both are circular orderings. 
\begin{theo}\label{t3}
A graph $G=(V,E)$ is a proper circular arc graph if and only if its vertex set $V(G)$ has a proper circular ordering.

\end{theo}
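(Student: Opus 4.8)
The plan is to adapt the two-directional argument of \Cref{t1} to the proper setting, using the observation recorded just above the statement: in a circular arc representation in which no arc is properly contained in another, the clockwise endpoints and the anticlockwise endpoints of the arcs occur in the same cyclic order (the latter read in the reverse direction). Thus a proper representation is precisely one whose two endpoint orderings are synchronized, and this synchronization is exactly what encodes the extra condition $v_lv_i\in E(G)$ in the definition of a proper circular ordering.

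For the forward direction, I would start from a proper circular arc representation of $G$ and order the vertices by increasing clockwise endpoints to obtain $\sigma=[v_1,\ldots,v_n]$, which is a circular ordering by \Cref{t1}. I would then reflect the host circle, reversing its orientation; this produces another proper circular arc representation of the same graph in which the roles of the clockwise and anticlockwise endpoints are interchanged. Because no arc contains another, ordering this reflected representation by clockwise endpoints yields exactly the reverse list $\sigma'=[v_n,\ldots,v_1]$, and \Cref{t1} applied to the reflected representation shows that $\sigma'$ is also a circular ordering. Hence $V(G)$ has a proper circular ordering.

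For the converse, suppose $\sigma$ and $\sigma'$ are both circular orderings. I would build arcs exactly as in the proof of \Cref{t1}: place $v_1,\ldots,v_n$ clockwise and set $A_{v_i}=[a_i,i]$, where $v_{a_i}$ is the farthest anticlockwise neighbor of $v_i$; by \Cref{t1} this representation satisfies adjacency $\iff$ intersection. It remains only to verify properness. Assuming toward a contradiction that $A_{v_i}\subsetneq A_{v_j}$, the anticlockwise end $v_{a_j}$ of $A_{v_j}$ lies strictly anticlockwise of $v_{a_i}$ while $v_i$ lies inside $A_{v_j}$, so that $v_{a_j}<v_i<v_j$ and $v_{a_j}v_j\in E(G)$. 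Reading these three vertices in the order $\sigma'$ and applying the circular ordering condition for $\sigma'$ then forces $v_iv_{a_j}\in E(G)$; but this exhibits a neighbor of $v_i$ strictly beyond its farthest anticlockwise neighbor $v_{a_i}$, contradicting the definition of $a_i$. The only remaining configurations are ties in which two arcs share an endpoint, which I would remove by a standard infinitesimal perturbation of the coinciding endpoints, staggered in the common cyclic order so as not to alter any intersection, so that all $2n$ endpoints become distinct and no containment survives.

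The main obstacle is the properness verification in the converse: the plain construction of \Cref{t1} controls only the clockwise endpoints (which sit at the vertices themselves), so one must show that the anticlockwise endpoints automatically fall into the matching cyclic order. This is precisely where the second hypothesis, that $\sigma'$ is a circular ordering, is indispensable, and the containment-contradiction above is the step that uses it; the attendant care needed for wrap-around arcs and for tie-breaking is the only delicate bookkeeping.
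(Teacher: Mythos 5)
Your proposal follows essentially the same route as the paper's own proof: the forward direction obtains $\sigma$ by ordering clockwise endpoints and gets the reversed ordering $\sigma'$ from the properness of the representation (the paper orders the anticlockwise endpoints in decreasing order, you reflect the circle, which is the same observation), and the converse runs the construction of \Cref{t1} and kills a containment $A_{v_i}\subsetneq A_{v_j}$ by applying the $\sigma'$-condition to the triple $v_{a_j}<v_i<v_j$ with $v_{a_j}v_j\in E(G)$, contradicting the choice of $a_i$ --- which is exactly the paper's ``$b<a$'' argument. If anything you are slightly more thorough: the paper only excludes $b<a$ and is silent about the tied case $b=a$ (two arcs sharing an anticlockwise endpoint, which is still a proper containment), whereas your staggered perturbation step explicitly disposes of it.
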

\begin{proof}
Let $G$ be a proper circular arc graph. So it has a proper circular arc representation, where arcs are taken from an $n$-hour clock and $n$ is the order of $G$. Now as in preceding discussion the vertex set $V(G)$ has a proper circular ordering.\\
\par 
Conversly, let the vertex set $V(G)$ has a proper circular ordering. Thus $V(G)$ has two circular orderings $\sigma = [v_1,v_2,...,v_{m-1},v_m,v_{m+1},...,v_n]$ and $\sigma'=[v_{m-1},v_{m-2},...,v_2,v_1,v_n,v_{n-1}\\,...v_m]$. As in the \cref{t1}, we construct circular arcs. Let $A_i=[i,l]$, where $v_l$ is the last vertex adjacent to $v_i$ in the circular ordering $\sigma$. Also let $A_j=[j,l']$ where $i<j$ and $v_{l'}$ is the last vertex adjacent to $v_j$ in $\sigma$. Now $l'<l$ is not possible as $\sigma'$ is also a circular ordering. Next, if $l=l'$, then we replace $l'$ by $l'+1-\frac{1}{2j}$. Thus we have a proper circular arc representation of $G$.  This completes the proof. 
\end{proof}
\noindent Using the above results, we shall prove that the class of proper circular arc graphs is strongly closed under powers.

\begin{theo}\label{t4}
Let $G$ be any graph and $k$ be a positive integer. If $G^k$ is a proper circular arc graph, then so is $G^{k+1}$.
\end{theo}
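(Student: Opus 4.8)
The plan is to prove Theorem \ref{t4} by showing that a proper circular ordering of $G^k$ is also a proper circular ordering of $G^{k+1}$, exactly mirroring the strategy used for Theorem \ref{t2} but now handling both directions of the ordering. By \cref{t3}, since $G^k$ is a proper circular arc graph, its vertex set $V(G)$ admits a proper circular ordering; that is, there is an ordering $\sigma=[v_1,v_2,\ldots,v_n]$ such that both $\sigma$ and its reverse $\sigma'$ are circular orderings of $G^k$. By \cref{t3} again, to conclude that $G^{k+1}$ is a proper circular arc graph it suffices to show that this same $\sigma$ is a proper circular ordering of $G^{k+1}$, i.e.\ that for $v_i<v_l<v_j$ with $v_iv_j\in E(G^{k+1})$ we get both $v_lv_j\in E(G^{k+1})$ and $v_lv_i\in E(G^{k+1})$.

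First I would observe that the conclusion $v_lv_j\in E(G^{k+1})$ is already supplied by Theorem \ref{t2}: since $\sigma$ is in particular a circular ordering of $G^k$, Theorem \ref{t2} shows it is a circular ordering of $G^{k+1}$, which is exactly the statement that $v_i<v_l<v_j$ and $v_iv_j\in E(G^{k+1})$ force $v_lv_j\in E(G^{k+1})$. So the real work is the second, symmetric conclusion $v_lv_i\in E(G^{k+1})$, and the natural idea is to apply the reverse ordering $\sigma'=[v_n,\ldots,v_1]$. Under $\sigma'$ the triple $v_i<v_l<v_j$ (clockwise in $\sigma$) reads as $v_j<v_l<v_i$, and the hypothesis $v_iv_j\in E(G^{k+1})$ is symmetric in its endpoints. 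Since $\sigma'$ is a circular ordering of $G^k$, Theorem \ref{t2} applied to $\sigma'$ shows $\sigma'$ is a circular ordering of $G^{k+1}$, and feeding in the triple $v_j<v_l<v_i$ with the edge $v_jv_i\in E(G^{k+1})$ yields $v_lv_i\in E(G^{k+1})$.

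Thus the two halves each reduce to a single invocation of Theorem \ref{t2}, once for $\sigma$ and once for $\sigma'$. Combining them shows $\sigma$ is a proper circular ordering of $G^{k+1}$, and \cref{t3} then certifies $G^{k+1}$ as a proper circular arc graph, completing the proof.

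The step I expect to require the most care is the bookkeeping of the orientation reversal: one must be confident that the relation $v_i<v_l<v_j$ in $\sigma$ corresponds precisely to $v_j<v_l<v_i$ in $\sigma'$ under the definition of ``$<$'' as clockwise succession, and that Theorem \ref{t2} is genuinely applicable to $\sigma'$ (which is legitimate because $\sigma'$ is itself a bona fide circular ordering of $G^k$ by the proper circular ordering hypothesis, and Theorem \ref{t2} is stated for an arbitrary circular ordering). No new graph-distance argument beyond the one already carried out in Theorem \ref{t2} is needed; the only novelty is recognizing that properness is exactly the symmetric closure condition, so that running the same argument in both orientations suffices.
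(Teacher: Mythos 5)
Your proposal is correct and follows essentially the same route as the paper: the paper likewise takes a proper circular ordering of $G^k$ and applies the argument from the proof of \cref{t2} to both $\sigma$ and its reverse $\sigma'$ to conclude both are circular orderings of $G^{k+1}$, then invokes \cref{t3}. Your added care about the orientation reversal ($v_i<v_l<v_j$ in $\sigma$ becoming $v_j<v_l<v_i$ in $\sigma'$) and about using the \emph{proof} of \cref{t2} (which works for any circular ordering) rather than merely its statement is a welcome, slightly more explicit rendering of exactly what the paper does.
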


\begin{proof}
Let $G^k$ be a proper circular arc graph and $[v_1,v_2,...,v_n]$ be a proper circular ordering of $G^k$. Thus both $[v_1,v_2,...,v_{m-1},v_m,v_{m+1},...,v_n]$ and $[v_{m-1},v_{m-2},...,v_2,v_1,v_n,v_{n-1},...v_m]$ are circular ordering of $G^k$. Now, by the arguments in the proof of \cref{t2}, we have that $[v_1,v_2,...,v_{m-1},v_m,v_{m+1},...,v_n]$ and $[v_{m-1},v_{m-2},...,v_2,v_1,v_n,v_{n-1},...v_m]$ are both circular orderings of $G^{k+1}$. Hence the vertices of $G^{k+1}$ has a proper circular ordering and so by \cref{t3}, $G^{k+1}$ is a proper circular arc graph.
\end{proof}\vspace{.5cm}
\textbf{Observation.} Using the method employed in the proof of the above theorem, we can easily show that the class of proper circular arc graphs is closed under powers.\vspace{.5cm}


\begin{thebibliography}{99}
\bibitem{1}G.J.Chang, G.L.Nemhauser, The $k$-domination and $k$-stability problems on sun-free chordal graphs, \emph{SIAM J. Algebraic Discrete Method} 5(1984) 211-217.
\bibitem{2}M.Chen, G.J.Chang, Families of graphs closed under taking powers, \emph{Graphs and Combinatorics} 17(2001), 207-212.
\bibitem{3}P.Duchet, Classical perfect graphs, \emph{Ann Discrete Math.} 21(1984), 67-96.
\bibitem{4}C.Flotow, \textit{Potenzen von Graphen}, Dissertation thesis, \emph{Universit$\ddot{a}$t Hamburg} 1995.
\bibitem{5}C.Flotow, On powers of circular arc graphs and proper circular arc graphs, \emph{Discrete Appl. Math.} 69(1996), 199-207. 
\bibitem{mcg}Martin Charles Golumbic, \emph{Algorithmic graph theory and perfect graphs}, Elsevier, 2004.

\bibitem{6}A.Lubiw, \textit{$\Gamma$-free matrices}, Master Thesis, Dept. of Combinatorics and  optimization, University of waterloo, 1982.
\bibitem{7}G.Ramalingam and C.Pandu Rangan, A uniform approach to domination problems on interval graphs, \emph{Inf. Process Lett.} 27(1998), 271-274.
\bibitem{8}A.Raychaudhuri, On powers of interval and unit interval graphs, \emph{Congres. Numer.} 59(1987), 235-242.
\bibitem{9}A.Raychaudhuri, On powers of strongly chordal graphs and circular arc graphs, \emph{Ars Comb.} 34(1992), 147-160.
\bibitem{10}F.S.Roberts, On the Compatibility between a graph and a simple order, \emph{J.Comb. Theory, Ser.B} 11(1971), 28-38.
\bibitem{11}A.Tucker, Matrix characterizations of circular arc graphs, \emph{Pacific Journal of Mathematics} 39(1971), 535-545.
\end{thebibliography}
\end{document}